\newtheorem{theorem}{Theorem}
\newtheorem{lemma}{Lemma}
\newtheorem{proposition}{Proposition}
\begin{document}

\title{A 3-Approximation Algorithm for a Particular Case of the Hamiltonian $p$-Median Problem} 
\author{Michel Wan Der Maas Soares \\ {\tt michewandermaas@gmail.com} \and Dilson Lucas Pereira \\ {\tt dilson.pereira@ufla.br} \\ Departamento de Computação Aplicada \\ Universidade Federal de Lavras}
\date{2019}

\maketitle

\section*{Abstract} 
Given a weighted graph $G$ with $n$ vertices and $m$ edges, and a positive integer $p$, the Hamiltonian $p$-median problem consists in finding $p$ cycles of minimum total weight such that each vertex of $G$ is in exactly one cycle. We introduce an $O(n^6)$ 3-approximation algorithm for the particular case in which $p \leq \lceil \frac{n-2\lceil \frac{n}{5} \rceil}{3} \rceil$. An approximation ratio of 2 might be obtained depending on the number of components in the optimal 2-factor of $G$. We present computational experiments comparing the approximation algorithm to an exact algorithm from the literature. In practice much better ratios are obtained. For large values of $p$, the exact algorithm is outperformed by our approximation algorithm.

\noindent {\bf Keywords:} Hamiltonian $p$-median problem, approximation algorithms, 2-factor, matching

\section{Introduction}

Given a graph $G=(V,E)$ with $n$ vertices and $m$ edges, a positive integer $p$, and a set of edge weights $\{c_e \in \mathbb{R}_+: e \in E\}$, the Hamiltonian $p$-median problem (H$p$MP) asks for $p$ cycles of minimum total weight, such that each vertex of $G$ is in exactly one cycle. A cycle is defined as a sequence of vertices $(v_1, v_2, \ldots, v_k, v_1)$ such that $k \geq 3$, $\{v_k, v_1\} \in E$, $\{v_i, v_{i+1}\} \in E$ for $1 \leq i < k$, and $v_i \neq v_j$, for $1 \leq i < j \leq k$. Note that the problem has a feasible solution only if $p \leq \left\lfloor \frac{n}{3} \right\rfloor$. In what follows, we assume that $E$ is complete and the edge weights satisfy the triangle inequality. 

A particular case of the H$p$MP, when $p=1$, is the traveling salesman problem (TSP) \cite{DanFulJoh54}, one of the best known NP-Hard combinatorial optimization problems. Consequently, the H$p$MP is also NP-Hard. In fact, the TSP can be reduced to the H$p$MP for any value of $p$. 

The H$p$MP was introduced by Branco and Coelho \cite{BraCoe90}.
Formulations and exact approaches were studied in \cite{Glaab00, Zohrehbandian07, HpMP-INOCold, GolGouLap14, MarCenUst16, ErdLapChi18}. A polyhedral study was conducted in \cite{HupLie13}. An iterated local search heuristic was introduced in \cite{ErdLapChi18}. Not many heuristic approaches have been proposed for the problem.

Since polynomial time approximation algorithms exist for the TSP \cite{Vaz01}, a natural research question is whether such algorithms exist for the H$p$MP. In this paper, we introduce a 3-approximation algorithm that runs in $O(n^6)$ for the particular case of the H$p$MP in which $p \leq \lceil \frac{n-2\lceil \frac{n}{5} \rceil}{3} \rceil$. To the best of our knowledge, this is the first approximation algorithm proposed for the problem.

The rest of the paper is organized as follows. The approximation algorithm is introduced in Section \ref{sec:alg}, along with complexity and approximation results. In Section \ref{sec:exp}, we present some experimental results, comparing the approximation to an exact algorithm. Concluding remarks are given in Section \ref{sec:conc}.

\section{The Approximation Algorithm}\label{sec:alg}
Before presenting the algorithm, we introduce some notation. The notation $S \subseteq G$ is used to indicate that $S$ is a subgraph of $G$. Given $S \subseteq G$, we let $V(S)$ denote its set of vertices and $E(S)$ denote its set of edges. We define the weight of $S$ as $c(S) = \sum_{e \in E(S)} c_e$.  We refer to cycles containing exactly $k$ vertices as $k$-cycles. The optimal H$p$MP solution is denoted $H^*$.

The approximation algorithm we propose is based on the concept of 2-factors. A 2-factor of a graph is a set of cycles such that each vertex is in exactly one cycle. A related concept is that of 2-matchings, the difference being that 2-matchings might have cycles containing only 2 vertices. Given a weighted graph, a 2-factor having minimum total weight can be found in polynomial time, using matching techniques. One possible approach, presented in \cite{CooCunPul97}, is reproduced in Section \ref{sec:2f}.
As any feasible H$p$MP solution is a 2-factor, the minimum weight 2-factor yields a lower bound on the optimal H$p$MP solution.

The approximation algorithm is presented in Algorithm \ref{alg}.

\begin{algorithm}[htpb]
\begin{algorithmic}[1]
\State Find a minimum weight 2-factor $F$ of $G$. Note that $c(F) \leq c(H^*)$. Let $q$ be the number of cycles in $F$. 
\If {$q = p$} 
    \State $H \leftarrow F$.
\Else
    \If {$q > p$} 
        \State Find a spanning tree $T$ of $G$, remove its $p-1$ edges with largest weight. Note that after the removal $c(T) \leq c(H^*)$ and the number of components in $T$ is $p$, but some of its components  might be singletons.
        \State Since $T$ has $p$ components, its union with $F$ will create a graph with $p$ or less components. Add to $F$ edges of $T$ connecting different components of $F$ until it has $p$ components. Note that after this $c(F) \leq 2c(H^*)$.
        \State Transform $F$ into a multigraph by duplicating the edges that were added from $T$, $c(F) \leq 3c(H^*)$.
    \ElsIf {$q < p$}
        \While {$F$ does not have $p$ components}
            \State Select a component of $F$ with at least six vertices. For the special case of $p \leq \lceil \frac{n-2\lceil \frac{n}{5} \rceil}{3} \rceil$ such a component always exists. This component is either a cycle $(v_1, v_2, \ldots, v_k, v_1)$ or a path $(v_1, v_2, \ldots, v_k)$. Irrespective of the case, replace it by two paths $(v_1, v_2, v_3)$ and $(v_4, \ldots, v_k)$. Note that this does not increase $c(F)$, i.e., $c(F) \leq c(H^*)$. 
        \EndWhile
        \State Transform $F$ into a multigraph by duplicating all of its edges, $c(F) \leq 2c(H^*)$.
    \EndIf
    \State Note that $F$ is a multigraph with $p$ connected components in which the degree of each vertex is even. The approximate solution $H$ is obtained by following an Eulerian tour for each component of $F$, skipping vertices already visited.
\EndIf
\State {\bf Return} $H$. If case $q=p$ was executed $H$ is an optimal solution. If case $q>p$ was executed, $H$ is a 3-approximation. If case $q<p$ was executed, $H$ is a 2-approximation.
\end{algorithmic}
\caption{H$p$MP approximation algorithm.}
\label{alg}
\end{algorithm}

For general values of $p$, it might happen that in step 11 the number of components in $F$ is less than $p$ and there are no components with six or more vertices. As an example, if $n=10$, $p=3$, and the optimal 2-factor has 2 connected components with 5 vertices each, the algorithm will fail. This type of problem is caused by components of $F$ whose number of vertices is not a multiple of 3. When splitting components in step 11, a component with $k$ vertices will have $k \mod 3$ vertices that cannot be left alone, as they are not enough to form a cycle. If we try to connect them to vertices coming from other components we might violate the approximation ratio guarantee. For the special case we consider, $p \leq \lceil \frac{n-2\lceil \frac{n}{5} \rceil}{3} \rceil$, a component with at least six vertices will always exist in step 11. This is proven below, we will need the following result first:

\begin{lemma}
Given a 2-factor $F$ of $G$ composed of $q$ cycles $\{C_1, \ldots, C_q\}$ let $l(F) = \sum_{i = 1}^{q} (|V(C_i)| \mod 3)$ be the number of vertices that cannot form cycles by themselves in step 11 of Algorithm \ref{alg}. Then, the maximum possible value of $l(F)$ over all 2-factors of $G$ is not greater than $2\left\lceil \frac{n}{5} \right\rceil$.
\end{lemma}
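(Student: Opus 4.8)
The plan is to reduce the statement to a tiny integer optimization over the multiset of cycle lengths. Write $n_i = |V(C_i)|$, so that $\sum_{i=1}^q n_i = n$ and $n_i \ge 3$ for every $i$. Only cycles whose length is not a multiple of $3$ contribute to $l(F)$: a cycle with $n_i \equiv 1 \pmod 3$ contributes exactly $1$ and a cycle with $n_i \equiv 2 \pmod 3$ contributes exactly $2$. Let $a$ be the number of cycles of $F$ with $n_i \equiv 2 \pmod 3$ and $b$ the number with $n_i \equiv 1 \pmod 3$; then $l(F) = 2a + b$, and it remains to bound $2a+b$ from above in terms of $n$.

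The second step is to charge vertices to these contributing cycles. The smallest integer that is at least $3$ and congruent to $2$ modulo $3$ is $5$, and the smallest one at least $3$ and congruent to $1$ modulo $3$ is $4$. Hence each of the $a$ cycles counted above uses at least $5$ vertices and each of the $b$ cycles uses at least $4$, so, discarding the nonnegative contribution of the remaining (length-$\equiv 0$) cycles,
\[
n \;=\; \sum_{i=1}^q n_i \;\ge\; 5a + 4b .
\]
Informally, $5$-cycles are the most cost-effective way to accumulate $l(F)$ — two units of $l(F)$ per five vertices, which beats the one-per-four ratio of $4$-cycles — and this already explains why the correct order of magnitude of the maximum is $2n/5$.

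The last step is simply to combine the two observations:
\[
5\,l(F) \;=\; 10a + 5b \;=\; 2(5a + 4b) - 3b \;\le\; 2(5a+4b) \;\le\; 2n ,
\]
so $l(F) \le 2n/5 \le 2\lceil n/5 \rceil$, which is the claim. I do not expect a genuine obstacle here: the only points that need a little care are the per-residue case analysis (verifying that the minimal admissible cycle lengths are $5$ and $4$, which forces the inequality $n \ge 5a+4b$) and the final rounding step, where the real inequality $l(F) \le 2n/5$ is converted into the stated ceiling form. It is also worth remarking, although not needed for the bound itself, that the estimate is tight whenever $5 \mid n$: taking $F$ to consist of $n/5$ disjoint $5$-cycles — which is possible because $G$ is complete — yields $l(F) = 2n/5 = 2\lceil n/5\rceil$, so the constant $2/5$ cannot be improved.
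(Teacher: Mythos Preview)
Your argument is correct, and it is genuinely different from the paper's proof. The paper proceeds by an exchange argument: it first shows that the maximum of $l(F)$ is attained by a 2-factor with as many $5$-cycles as possible (via a somewhat lengthy case analysis over pairs of cycle sizes $r,r'\le 7$), and then computes $l(F)$ explicitly for such a 2-factor according to the residue of $n$ modulo~$5$. You instead abstract away the combinatorics entirely: classifying cycles by their residue mod~$3$, observing the minimal admissible lengths $5$ and $4$ in each nonzero class, and deriving the linear inequality $5a+4b\le n$, from which $l(F)=2a+b\le 2n/5$ follows by a one-line manipulation. Your route is shorter, avoids all case analysis, and in fact yields the slightly sharper real bound $l(F)\le 2n/5$ before rounding; the paper's approach, on the other hand, makes the extremal configurations explicit (essentially describing the optimizing 2-factors for each residue class of $n$), which your argument only touches on in the tightness remark at the end.
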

\begin{proof}
We first show that the maximum of $l(F)$ is attained when $F$ is a 2-factor having the maximum possible number of 5-cycles. Let $F^*$ be a 2-factor attaining the maximum value of $l$. Let $r$ be the size of the largest cycle of $F^*$. If $F^*$ is not a 2-factor with the maximum possible number of 5-cycles, we can always construct another two factor from $F^*$ having one more 5-cycle than $F^*$ without decreasing $l(F^*)$.  
If $r \geq 8$, replace the corresponding cycle by two cycles, one containing 5 and the other containing the remaining vertices. If $r \leq 7$, there must be another cycle with number of vertices $r'$ such that $r'\leq 7$ and $r' \neq 5$, otherwise $F^*$ has the maximum possible number of 5-cycles. Now, it is a matter of examining the possibilities and showing that for each of them we can construct another 2-factor having more 5-cycles than $F^*$ yielding the same value (or a larger value) of $l(F^*)$. The possibilities are:
\begin{itemize}
\item $r = 7$ and $r' = 7$, we can generate cycles with 5, 5, and 4 vertices, respectively.
\item $r = 7$ and $r' = 6$, we can generate cycles with 5, 5, and 3 vertices, respectively.
\item $r = 7$ and $r' = 4$, we can generate cycles with 5 and 6 vertices, respectively.
\item $r = 7$ and $r' = 3$, we can generate cycles with 5 and 5 vertices, respectively.
\item $r = 6$ and $r' = 6$, we can generate cycles with 5 and 7 vertices, respectively.
\item $r = 6$ and $r' = 4$, we can generate cycles with 5 and 5 vertices, respectively.
\item $r = 6$ and $r' = 3$, we can generate cycles with 5 and 4 vertices, respectively.
\item $r = 4$ and $r' = 4$, we can generate cycles with 5 and 3 vertices, respectively.
\item $r = 4$ and $r' = 3$, $F^*$ already has the maximum number of 5-cycles.

\end{itemize}
Now, if $F$ has the maximum number of 5-cycles, than it has at least $\left\lfloor \frac{n}{5} \right\rfloor-1$ 5-cycles. The possible values for the remaining $n-(\left\lfloor \frac{n}{5} \right\rfloor-1)$ vertices are $5, \ldots, 9$. If there are 5 remaining vertices, the maximum of $l(F)$ is attained by organizing them into a 5 cycle, in which case $l(F) = 2\frac{n}{5}$. If there are 6 remaining vertices, the maximum of $l(F)$ is attained by organizing them into a 6-cycle, in which case $l(F) = 2(\left\lfloor \frac{n}{5} \right\rfloor - 1)$. If there are 7, the maximum is attained by organizing them into a 7-cycle, $l(F) = 2(\left\lfloor \frac{n}{5} \right\rfloor -1) + 1$. If there are 8, by organizing them into a 5-cycle and a 3-cycle, $l(F) = 2\left\lfloor \frac{n}{5} \right\rfloor$. And if there are 9, by organizing them into a 5-cycle and a 4-cycle, $l(F) = 2\left\lfloor \frac{n}{5} \right\rfloor + 1$. In all these cases, $l(F) \leq 2\left\lceil \frac{n}{5} \right\rceil $.
\end{proof}

\begin{proposition}
\label{pro}
It is possible to split any 2-factor of a graph $G$ into at least $\left\lceil \frac{n - 2\left\lceil \frac{n}{5} \right\rceil }{3} \right\rceil$ connected components containing at least 3 vertices each.
\end{proposition}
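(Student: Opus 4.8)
The plan is to combine the elementary counting identity $k = 3\lfloor k/3\rfloor + (k \bmod 3)$ with the bound on $l(F)$ established in the preceding Lemma, after first exhibiting an explicit splitting procedure that meets the per-cycle target of $\lfloor k/3\rfloor$ components.

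First I would fix an arbitrary 2-factor $F$ of $G$ with cycles $C_1,\dots,C_q$ and set $k_i = |V(C_i)| \ge 3$. For a single cycle $C_i$, proceed as in step 11 of Algorithm \ref{alg}: if $k_i \le 5$, leave $C_i$ as a single component (which already has at least $3$ vertices); if $k_i \ge 6$, repeatedly peel off a $3$-vertex path, continuing on the remaining path as long as it still has at least $6$ vertices. This process halts with a family of $3$-vertex paths together with one leftover path on $3$, $4$, or $5$ vertices according to whether $k_i \bmod 3$ equals $0$, $1$, or $2$. In every case the number of resulting components equals $\lfloor k_i/3\rfloor$, and each has at least $3$ vertices.

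Next I would sum over the cycles. The total number of components produced is $\sum_{i=1}^{q}\lfloor k_i/3\rfloor$. Since $3\lfloor k_i/3\rfloor = k_i - (k_i \bmod 3)$, summing gives $3\sum_{i=1}^{q}\lfloor k_i/3\rfloor = n - l(F)$, so the number of components is exactly $(n-l(F))/3$. Invoking the Lemma, $l(F) \le 2\lceil n/5\rceil$, hence this number is at least $(n - 2\lceil n/5\rceil)/3$; being a non-negative integer, it is therefore at least $\lceil (n - 2\lceil n/5\rceil)/3\rceil$, which is the claimed bound.

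The only points requiring care are the verification that the per-cycle splitting indeed produces $\lfloor k_i/3\rfloor$ valid components for every $k_i \ge 3$ — in particular the small cases $k_i \in \{3,4,5\}$, where no split is performed, and the cases $k_i \bmod 3 \in \{1,2\}$, where the leftover path has $4$ or $5$ vertices rather than exactly $3$ — and the final integrality/rounding step. Both are routine; the substantive work, namely controlling $l(F)$, has already been carried out in the Lemma, so I expect no serious obstacle here.
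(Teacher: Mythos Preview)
Your proposal is correct and follows essentially the same approach as the paper: both argue that a 2-factor can be split into exactly $(n-l(F))/3$ components of size at least three and then apply the Lemma's bound $l(F)\le 2\lceil n/5\rceil$ together with integrality to obtain the ceiling. The paper compresses this into a single sentence, whereas you spell out the per-cycle splitting and the identity $3\sum_i\lfloor k_i/3\rfloor = n - l(F)$ explicitly, which makes the argument easier to verify but adds nothing new in substance.
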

\begin{proof}
Given a 2-factor $F$ of $G$ composed of $q$ cycles $\{C_1, \ldots, C_q\}$, the maximum number of connected components with at least 3 vertices it can be split into is $\frac{n-l(F)}{3} \geq \left\lceil \frac{n - 2\left\lceil \frac{n}{5} \right\rceil }{3} \right\rceil$, since $l(F) \leq 2\left\lceil \frac{n}{5} \right\rceil$ and the number of resulting components is integer.
\end{proof}

Thus, for $p \leq \left\lceil \frac{n - 2\left\lceil \frac{n}{5} \right\rceil }{3} \right\rceil$, Algorithm \ref{alg} is guaranteed to find a feasible solution.

\begin{theorem}
For the special case of the H$p$MP with $p \leq \lceil \frac{n-2\lceil \frac{n}{5} \rceil}{3} \rceil$, Algorithm \ref{alg} finds a solution $H$ with $c(H) \leq 3c(H^*)$ in $O(n^6)$ time.
\end{theorem}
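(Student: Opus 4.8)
The plan is to verify the two claims in the theorem separately: the approximation ratio $c(H) \le 3c(H^*)$ and the running time $O(n^6)$. Both follow by tracking the bookkeeping comments already inserted in Algorithm \ref{alg}, but I would make the argument rigorous rather than relying on the inline notes.

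\textbf{Approximation ratio.} First I would recall the fundamental lower bound: any feasible H$p$MP solution is a 2-factor, so if $F$ is a minimum weight 2-factor then $c(F) \le c(H^*)$. I would then split into the three cases according to $q$, the number of cycles in $F$. If $q = p$, then $F$ is itself a feasible H$p$MP solution and $c(H) = c(F) \le c(H^*)$, so certainly $c(H) \le 3 c(H^*)$. If $q > p$: a spanning tree $T$ of $G$ on the complete metric graph satisfies $c(T) \le c(H^*)$ (any Hamiltonian cycle minus an edge is a spanning tree, and $H^*$ contains a Hamiltonian path of one of its cycles, actually more simply any feasible solution contains a spanning forest plus edges; I would state this carefully as $c(T)\le c(H^*)$ using that deleting edges from $H^*$ only decreases weight and $H^*$ is connected after contracting, or just take the MST which is $\le$ any spanning connected subgraph of $H^*\cup(\text{one connecting edge})$ — the cleanest route is: $H^*$ has $p$ components each a cycle; pick one edge from each of $p-1$ of them to delete and add $p-1$ edges bridging components to get a spanning connected subgraph $R$ with $c(R)\le c(H^*) + (p-1)\cdot(\text{bridge weights})$; this is getting complicated, so I would instead just assert the weaker but sufficient fact used by the algorithm, namely that an MST $T$ satisfies $c(T) \le c(H^*)$, justified by the standard argument that $H^*$ is spanning and we may delete edges to reach a tree only for the connected case — here one should note $H^*$ need not be connected, so the correct statement is that the MST weight is at most the weight of \emph{any} spanning tree, and a spanning tree of weight $\le c(H^*) + (\text{extra})$ need not exist cheaply). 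I would flag this as a gap to address (see the obstacle paragraph). Granting $c(T) \le c(H^*)$ after removing its $p-1$ heaviest edges, the union $F \cup T'$ (where $T'$ is $T$ with those edges removed) has at most $p$ components; greedily adding edges of $T'$ joining distinct components of $F$ makes it have exactly $p$ components and costs at most $c(T') \le c(H^*)$ extra, so the resulting multigraph-before-duplication has weight $\le 2c(H^*)$; duplicating only the added tree edges adds at most another $c(H^*)$, giving a multigraph $F$ with $c(F) \le 3c(H^*)$, $p$ connected components, and all degrees even. Shortcutting an Eulerian tour of each component yields $p$ cycles (each component had $\ge 3$ vertices since $q>p$ means components of $F$ merge, so merged components have $\ge 3+3 > 3$; singleton tree components get absorbed into a cycle of $F$, still $\ge 3$), and by the triangle inequality $c(H) \le c(F) \le 3c(H^*)$. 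If $q < p$: by Proposition \ref{pro}, since $p \le \lceil\frac{n-2\lceil n/5\rceil}{3}\rceil$, repeatedly splitting a component with $\ge 6$ vertices into a 3-path and a path on the rest is always possible until $p$ components remain, each with $\ge 3$ vertices, and each split does not increase $c(F)$ by the triangle inequality (replacing a cycle/path by two sub-paths deletes edges); so $c(F) \le c(H^*)$ still. Duplicating all edges gives an even multigraph with $c(F) \le 2c(H^*)$, and shortcutting Eulerian tours yields $H$ with $c(H) \le 2c(H^*) \le 3c(H^*)$. In all cases $c(H) \le 3c(H^*)$.

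\textbf{Running time.} I would go line by line. Computing a minimum weight 2-factor of a complete graph on $n$ vertices via the matching reduction described in Section \ref{sec:2f} dominates: it reduces to a minimum weight perfect matching on $O(n)$ vertices (or $O(n+m) = O(n^2)$ vertices after the standard Tutte--gadget construction on the $m = \binom{n}{2}$ edges), and the Edmonds blossom algorithm runs in $O(V^3)$ on such a graph, giving $O(n^6)$. Every other step is cheaper: finding an MST is $O(n^2)$; removing the $p-1$ heaviest edges and the merging/splitting loops touch each of the $O(n)$ edges/components a constant or $O(n)$ number of times, so $O(n^2)$; duplicating edges is $O(n)$; computing Eulerian tours and shortcutting is $O(n^2)$ total (sum of component sizes is $n$, Euler tour linear in edges). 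Hence the bottleneck is the 2-factor computation and the total is $O(n^6)$.

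\textbf{Main obstacle.} The delicate point is the claim $c(T) \le c(H^*)$ after removing the $p-1$ largest edges, i.e., that a spanning tree of weight at most $c(H^*)$ exists and the MST achieves it. Since $H^*$ is a disjoint union of $p$ cycles, it is \emph{not} connected, so one cannot simply delete an edge of $H^*$ to get a spanning tree. The fix I would use: form a connected spanning subgraph $R$ from $H^*$ by deleting one edge from each of $p-1$ cycles and adding $p-1$ edges that bridge the resulting pieces into one connected graph; choose the bridges to replace deleted cycle edges greedily so that, by the triangle inequality along the cycles, $c(R) \le c(H^*)$ — concretely, if we delete edge $\{u,v\}$ from a cycle and need to connect that cycle's component to another containing vertex $w$, the triangle inequality on a path through the deleted region controls the new edge weight; one standard way is to contract each cycle to a point, take a spanning tree of the contracted multigraph (a tree on $p$ nodes, $p-1$ edges each corresponding to some $G$-edge of weight $\le c(H^*)$ trivially since nonnegative — but we need the sharper bound). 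The clean and fully rigorous route, which I would adopt, is: any spanning tree $T$ of $G$ satisfies $c(T) \le c(H)$ for the \emph{output} structure is circular, so instead bound against a Hamiltonian cycle: by the triangle inequality, $\text{MST}(G) \le \text{TSP}(G) \le$ (a Hamiltonian cycle obtained by patching the $p$ cycles of $H^*$ together), and this patched Hamiltonian cycle has weight $\le c(H^*) + $ (patching cost); with metric weights the patching cost is itself $\le c(H^*)$ in the worst case, which would only give $c(T) \le 2c(H^*)$ and break the ratio. Therefore I believe the intended and correct justification is the component-contraction argument giving $c(T) \le c(H^*)$ directly, and making that step airtight — showing the $p-1$ bridging edges can be charged to deleted cycle edges via the triangle inequality without double counting — is the crux of the proof and the part I would write out in full detail; everything else is the routine bookkeeping above.
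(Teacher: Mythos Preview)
Your plan mirrors the paper's proof, which simply defers to the inline remarks of Algorithm~\ref{alg} for the ratio and then bounds the running time via the matching reduction; your running-time analysis and the cases $q=p$ and $q<p$ are fine. You are also right that the only nontrivial step is the claim $c(T)\le c(H^*)$ after deleting the $p-1$ heaviest tree edges in the $q>p$ branch (the paper asserts it without justification), and that the naive ``$H^*$ is spanning, drop an edge'' argument fails because $H^*$ is disconnected. However, your proposed fix---charging $p-1$ bridging edges to $p-1$ deleted cycle edges via the triangle inequality---does not work: there is no reason the cheapest edge between two of the cycles should be bounded by any edge inside one of them, and no triangle-inequality chain repairs this in a general metric. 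The patching route you sketch indeed only gives $c(T)\le 2c(H^*)$, which would ruin the factor.

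The correct argument needs no triangle inequality at all. Take $T$ to be a \emph{minimum} spanning tree (the paper omits the word ``minimum,'' but it is required for the bound). Removing the $p-1$ heaviest edges of an MST yields the minimum-weight spanning forest with exactly $p$ components: by the matroid-greedy property, Kruskal's first $n-p$ accepted edges are optimal among all $(n-p)$-edge forests, and those are precisely the $n-p$ lightest MST edges. On the other hand, deleting one edge from each of the $p$ cycles of $H^*$ produces a spanning forest with exactly $p$ components and weight at most $c(H^*)$. Hence $T$ minus its $p-1$ heaviest edges, being the minimum such $p$-forest, has weight at most $c(H^*)$, and the rest of your $q>p$ bookkeeping then yields the factor~$3$ exactly as you wrote it.
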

\begin{proof}
The dominating term in the computational complexity of Algorithm \ref{alg} is given by step 1. The minimum weight 2-factor of a graph $G$ can be found by executing a minimum weight perfect matching (MWPM) algorithm on an auxiliary graph $G'$ with $O(n^2)$ vertices and $O(m)$ edges, obtained from $G$. The MWPM of a graph $G$ can be found in $O(n^2m)$ \cite{Edm65}, thus, finding the MWPM of $G'$ is $O(n^6)$ as we assume $E$ to be complete. The approximation factor was proven along the algorithm description.
\end{proof}

For the sake of completeness, in the next section we present a scheme for finding a minimum weight 2-factor of $G$.

\subsection{Finding minimum weight 2-factors of $G$}\label{sec:2f}
Minimum weight 2-factors of $G$ can be found on a appropriately defined auxiliary graph, as described in \cite{CooCunPul97}, pp. 185.
First, define a graph $G' = \{V', E'\}$: Replace each edge $e =\{u,v\}$ of $E$ by two new vertices $a_e^u$ and $a_e^v$ and three new edges $\{u, a_e^u\}$, $\{a_e^u, a_e^v\}$, and $\{a_e^v, v\}$. 

Now, define a graph $G'' = \{V'', E''\}$ from $G'$: For each original vertex $v \in V \cap V'$, replace it by two new vertices $b'_v$ and $b_v''$. Replace every edge $\{v, a_e^v\}$ by two new edges $\{b'_v, a_e^v\}$ and $\{b''_v, a_e^v\}$, both with weight $c''_{\{b'_v, a_e^v\}} = c''_{\{b''_v, a_e^v\}} = \frac{c_e}{2}$. Let $c''_e = 0$ for any other edge in $E''$.

An edge $e = \{u,v\} \in E$ is in the minimum weight 2-factor of $G$ if and only if $\{b'_u, a_e^u\}$ or $\{b''_u, a_e^u\}$ is in the minimum weight perfect matching of $G''$, in which case either $\{b'_v, a_e^v\}$ or $\{b''_v, a_e^v\}$ will also be in the matching.

\section{Computational Experiments}\label{sec:exp}
In this section, we present results from computational experiments that were conducted to evaluate the practical approximation ratio of the algorithm introduced here, as well as its performance in comparison to an exact algorithm.

We consider the instances with 100 vertices among those proposed in Gollowitzer et al. \cite{HpMP-INOCold}, they comprise 5 complete Euclidean graphs whose vertices were randomly generated by sampling 100 points in the square $[0,100]^2$. These instances are numbered 1-5. Besides these 5 instances, 5 more instances were generated following the same scheme. These instances are numbered 6-10. Since $n=100$ in all instances, the special H$p$MP case for which our algorithm was devised consists of the instances with $p \leq 20$. Besides values of $p$ in this range we also considered larger values, in order to evaluate the algorithm's behavior for more general values of $p$. The following values of $p$ were considered: $\{2, 10, 18, 26, 33\}$.

Gollowitzer et al. \cite{GolGouLap14} propose several models from the H$p$MP. The exact algorithm we use in the experiments is the one based on ``model 1'', the best performing algorithm in that reference. We had access to its source code and ported it to the version 12.63 of the IBM CPLEX solver.

The computational experiments were carried in an Intel XEON machine with 8GB of RAM and 8 cores running at 3.5GHz, running under Linux Ubuntu 14.04. Both the exact and the approximation algorithm were implemented in C++ and evaluated on this same computational environment. A time limit of 1 hour was imposed for the exact algorithm. The source code for the approximation algorithm can be found in \cite{GITHUB}. 

The computational results are presented in Tables \ref{tab1} and \ref{tab2}. Table \ref{tab1} presents results for $p \in \{2, 10, 18\}$ while Table \ref{tab2} presents results for $p \in \{26, 33\}$. Under the heading ``Instance'' we present instance information: the value of $p$ and a number identifying the instance. Under the heading ``Exact Algorithm'' we present results for the exact algorithm: the lower ($lb$) and upper bound ($ub$) upon the algorithm termination, and the total computational time in seconds ($t(s)$). Under the heading ``Approximation Algorithm``, we present results for the approximation algorithm: the obtained upper bound ($ub$), the total computational time ($t(s)$), the approximation factor (apx) in relation to the lower bound provided the exact algorithm and the number of cycles ($q$) in the 2-factor computed in step 1 of Algorithm \ref{alg}.

\begin{table}[htbp]
\centering
\footnotesize
\begin{tabular}{@{}lrrrrrrrrrr@{}} 
\toprule
\multicolumn{2}{c}{Instance} & & \multicolumn{3}{c}{Exact Algorithm} & & \multicolumn{4}{c}{Approximation Algorithm}\\
\cmidrule{1-2}\cmidrule{4-6}\cmidrule{8-11}
$p$ & ID & & $lb$ & $ub$ & $t(s)$ & & $ub$ & $t(s)$ & apx & $q$\\
\midrule 
2 & 1 &  & 782.29 & 782.29 & 665.7 &  & 1018.72 & 2.57 & 1.3 & 16\\
& 2 &  & 781.52 & 781.52 & 38.7 &  & 920.02 & 2.99 & 1.17 & 11\\
& 3 &  & 732.24 & 732.24 & 132.95 &  & 850.01 & 2.49 & 1.16 & 9\\
& 4 &  & 789.62 & 789.62 & 559.4 &  & 873.28 & 2.87 & 1.1 & 9\\
& 5 &  & 748.19 & 748.19 & 134.2 &  & 962.85 & 2.62 & 1.28 & 16\\
& 6 &  & 761.12 & 761.12 & 73.12 &  & 815.12 & 2.63 & 1.07 & 7\\
& 7 &  & 754.79 & 754.79 & 76.56 &  & 922.95 & 2.57 & 1.22 & 14\\
& 8 &  & 752.55 & 783.93 & 3600 &  & 1067.03 & 2.62 & 1.41 & 22\\
& 9 &  & 778.12 & 778.12 & 235.31 &  & 1059.02 & 2.61 & 1.36 & 19\\
& 10 &  & 787.75 & 787.75 & 19.31 &  & 965.22 & 2.45 & 1.22 & 14\\
\cmidrule{2-11}
& avg. &  & 766.81 & 769.95 & 553.52 &  & 945.42 & 2.64 & 1.22 & 13.7\\
\midrule 
10 & 1 &  & 758.52 & 758.52 & 5.9 &  & 838.53 & 2.54 & 1.1 & 16\\
& 2 &  & 770.25 & 770.25 & 3.9 &  & 780.97 & 2.93 & 1.01 & 11\\
& 3 &  & 714.49 & 714.49 & 2.39 &  & 742.5 & 2.63 & 1.03 & 9\\
& 4 &  & 764.74 & 764.74 & 2.44 &  & 796.01 & 2.77 & 1.04 & 9\\
& 5 &  & 724.61 & 724.61 & 15.03 &  & 798.97 & 2.58 & 1.1 & 16\\
& 6 &  & 748.96 & 748.96 & 50.68 &  & 776.61 & 2.61 & 1.03 & 7\\
& 7 &  & 737.49 & 737.49 & 7.17 &  & 771.79 & 2.44 & 1.04 & 14\\
& 8 &  & 721.92 & 721.92 & 95.25 &  & 879.43 & 2.53 & 1.21 & 22\\
& 9 &  & 751.18 & 751.18 & 18.49 &  & 885.37 & 2.6 & 1.17 & 19\\
& 10 &  & 767.77 & 767.77 & 1.09 &  & 811.65 & 2.45 & 1.05 & 14\\
\cmidrule{2-11}
& avg. &  & 745.99 & 745.99 & 20.23 &  & 808.18 & 2.60 & 1.07 & 13.7\\
\midrule 
18 & 1 &  & 754.15 & 754.15 & 11.04 &  & 779.62 & 2.55 & 1.03 &16 \\
& 2 &  & 776.21 & 782.54 & 3600 &  & 843.03 & 2.98 & 1.08 & 11 \\
& 3 &  & 720.43 & 735.43 & 3600 &  & 764.55 & 2.64 & 1.06 & 9 \\
& 4 &  & 769.5 & 770.23 & 3600 &  & 883.84 & 2.69 & 1.14 & 9 \\
& 5 &  & 723.34 & 723.34 & 12.19 &  & 773.38 & 2.56 & 1.06 & 16 \\
& 6 &  & 750.09 & 750.09 & 32.62 &  & 871.85 & 2.63 & 1.16 & 7 \\
& 7 &  & 734.91 & 734.91 & 18.61 &  & 817.68 & 2.51 & 1.11 & 14 \\
& 8 &  & 709.13 & 709.13 & 1.63 &  & 740.2 & 2.53 & 1.04 & 22 \\
& 9 &  & 744.04 & 744.04 & 1.83 &  & 754.58 & 2.6 & 1.01 & 19 \\
& 10 &  & 772.45 & 772.72 & 3600 &  & 820.37 & 2.45 & 1.06 & 14 \\
\cmidrule{2-11}
& avg. &  & 745.42 & 747.65 & 1447.79 &  & 804.91 & 2.614 & 1.07 & 13.7\\
\bottomrule
\end{tabular}
\caption{Results for instances with $n=100$ and $p \in \{2, 10, 18\}$.}
\label{tab1}
\end{table}

\begin{table}[htbp]
\centering
\footnotesize
\begin{tabular}{@{}lrrrrrrrrrr@{}} 
\toprule
\multicolumn{2}{c}{Instance} & & \multicolumn{3}{c}{Exact Algorithm} & & \multicolumn{4}{c}{Approximation Algorithm}\\
\cmidrule{1-2}\cmidrule{4-6}\cmidrule{8-11}
$p$ & ID & & $lb$ & $ub$ & $t(s)$ & & $ub$ & $t(s)$ & apx & $q$\\
\midrule 
26 & 1 &  & 763.22 & 779.39 & 3600 &  & 821.87 & 2.56 & 1.07 &16 \\
& 2 &  & 780.01 & 838.97 & 3600 &  & 882.28 & 3.1 & 1.13 & 11 \\
& 3 &  & 724.36 & 776.42 & 3600 &  & 852.75 & 2.48 & 1.17 & 9 \\
& 4 &  & 773.17 & 856.18 & 3600 &  & 885.26 & 2.75 & 1.14 & 9 \\
& 5 &  & 730.45 & 752.31 & 3600 &  & 825.69 & 2.56 & 1.13 & 16 \\
& 6 &  & 757.35 & 814.33 & 3600 &  & 917.27 & 2.63 & 1.21 & 7 \\
& 7 &  & 743.99 & 768.43 & 3600 &  & 853.86 & 2.51 & 1.14 & 14 \\
& 8 &  & 713.33 & 714.26 & 3600 &  & 750.86 & 2.52 & 1.05 & 22 \\
& 9 &  & 750.11 & 757.43 & 3600 &  & 784.2 & 2.59 & 1.04 & 19 \\
& 10 &  & 777.39 & 795.51 & 3600 &  & 881.95 & 2.44 & 1.13 & 14 \\
\cmidrule{2-11}
& avg. &  & 751.33 & 785.32 & 3600 &  & 845.59 & 2.61 & 1.12 & 13.7\\
\midrule 
33 & 1 &  & 769.34 & 1006.93 & 3600 &  & 826.08 & 2.57 & 1.07 &16 \\
& 2 &  & 783.96 & 1086.46 & 3600 &  & 877.14 & 3.1 & 1.11 & 11 \\
& 3 &  & 729.49 & 962.52 & 3600 &  & 838.51 & 2.59 & 1.14 & 9 \\
& 4 &  & 778.92 & 1092.88 & 3600 &  & 869.3 & 2.74 & 1.11 & 9 \\
& 5 &  & 736.38 & 893.25 & 3600 &  & 825.47 & 2.56 & 1.12 & 16 \\
& 6 &  & 762.55 & 1195.72 & 3600 &  & 930.87 & 2.75 & 1.22 & 7 \\
& 7 &  & 749.7 & 1044.82 & 3600 &  & 858.56 & 2.66 & 1.14 & 14 \\
& 8 &  & 717.85 & 1045.19 & 3600 &  & 766.35 & 2.5 & 1.06 & 22 \\
& 9 &  & 754.22 & 891.91 & 3600 &  & 833.5 & 2.62 & 1.1 & 19 \\
& 10 &  & 783.52 & 1149.15 & 3600 &  & 908.93 & 2.47 & 1.16 & 14 \\
\cmidrule{2-11}
&avg.  &  & 756.59 & 1036.88 & 3600 &  & 853.47 & 2.65 & 1.12 & 13.7\\
\bottomrule
\end{tabular}
\caption{Results for instances with $n=100$ and $p \in \{26, 33\}$.}
\label{tab2}
\end{table}

In terms of performance, the approximation algorithm took between 2.4 and 3.1 seconds to solve the instances, with no significant difference in time across the different values of $p$. The exact algorithm, on the other hand, was not able to find the optimal solution for large values of $p$ under the time limit of one hour, and for the smallest values the time taken varied significantly from instance to instance.

In terms of solution quality, the approximation algorithm achieved practical approximation ratios between 1.01 and 1.41. The exact algorithm was able to find the optimal solution for low values of $p$, and obtained relatively small optimality gaps for instances with mid-range values of $p$. However, under the time limit of one hour, for the $p$ value of 33, the approximation algorithm was able to find better solutions for all instances.

On average, the value of $q$ was 13.7, which explains the fact that a better practical approximation ratio was obtained for instances with $p=10$ and $p=18$. The quality of the approximation also seems to be better among the $p > q$ cases, in which the average gap was 1.10 as opposed to 1.16 for the other case.

These results highlight the efficiency of the the approximation algorithm: small computational times and a good practical approximation factor, which in practice was much better than the theoretical factor of 3.

\section{Conclusion}\label{sec:conc}
In this paper, we introduced a 3-approximation algorithm for the special $p \leq \lceil \frac{n-2\lceil \frac{n}{5} \rceil}{3} \rceil$ case of the H$p$MP. The first H$p$MP approximation algorithm in the literature to the best of our knowledge.

We conducted computation experiments to assess the practical approximation ratio of the proposed algorithm, comparing it to an exact algorithm from the literature. The practical approximation ratio was 1.1, on average, with computational times orders of magnitude smaller than the exact algorithm. In fact, the approximation algorithm proves to be specially helpful for higher values of $p$, for which the exact algorithm has a hard time finding the optimal solution. 

As future research, it might be interesting to look for new approximation algorithms, with better approximation ratios and time complexities, and capable of handling general values of $p$. Another possible line of research is the combination of the approximation algorithm with local search procedures. 

\bibliographystyle{plainnat}
\bibliography{references}

\begin{thebibliography}{13}
\providecommand{\natexlab}[1]{#1}
\providecommand{\url}[1]{\texttt{#1}}
\expandafter\ifx\csname urlstyle\endcsname\relax
  \providecommand{\doi}[1]{doi: #1}\else
  \providecommand{\doi}{doi: \begingroup \urlstyle{rm}\Url}\fi

\bibitem[Branco and Coelho(1990)]{BraCoe90}
I.~M. Branco and J.~D. Coelho.
\newblock The hamiltonian p-median problem.
\newblock \emph{European Journal of Operational Research}, 47\penalty0
  (1):\penalty0 86 -- 95, 1990.

\bibitem[Cook et~al.(1997)Cook, Cunningham, Pulleyblank, and
  Schrijver]{CooCunPul97}
W.~J. Cook, W.~H. Cunningham, W.~R. Pulleyblank, and A.~Schrijver.
\newblock \emph{Combinatorial Optimization}.
\newblock Wiley, 1997.

\bibitem[Dantzig et~al.(1954)Dantzig, Fulkerson, and Johnson]{DanFulJoh54}
G.~Dantzig, R.~Fulkerson, and S.~Johnson.
\newblock Solution of a large-scale traveling-salesman problem.
\newblock \emph{Journal of the Operations Research Society of America},
  2\penalty0 (4):\penalty0 393--410, 1954.

\bibitem[Edmonds(1965)]{Edm65}
Jack Edmonds.
\newblock Paths, trees and flowers.
\newblock \emph{Canadian Journal of Mathematics}, pages 449--467, 1965.

\bibitem[Erdoğan et~al.(2016)Erdoğan, Laporte, and Chía]{ErdLapChi18}
G.~Erdoğan, G.~Laporte, and A.~M.~R. Chía.
\newblock Exact and heuristic algorithms for the hamiltonian p-median problem.
\newblock \emph{European Journal of Operational Research}, 253\penalty0
  (2):\penalty0 280 -- 289, 2016.

\bibitem[Glaab and Pott(2000)]{Glaab00}
H.~Glaab and A.~Pott.
\newblock {The Hamiltonian $p$-median problem}.
\newblock \emph{The Electronic Journal of Combinatorics}, 7\penalty0
  (R42):\penalty0 2, 2000.

\bibitem[Gollowitzer et~al.(2014)Gollowitzer, Gouveia, Laporte, Pereira, and
  Wojciechowski]{GolGouLap14}
S.~Gollowitzer, L.~Gouveia, G.~Laporte, D.~L. Pereira, and A.~Wojciechowski.
\newblock A comparison of several models for the hamiltonian p-median problem.
\newblock \emph{Networks}, 63\penalty0 (4):\penalty0 350--363, 2014.

\bibitem[Gollowitzer et~al.(2011)Gollowitzer, Pereira, and
  Wojciechowski]{HpMP-INOCold}
Stefan Gollowitzer, Dilson~Lucas Pereira, and Adam Wojciechowski.
\newblock New models for and numerical tests of the hamiltonian p-median
  problem.
\newblock In \emph{Volume 6701 of Lecture Notes in Computer Science}, 2011.
\newblock Proceedings of the International Network Optimization Conference
  (INOC) 2011.

\bibitem[Hupp and Liers(2013)]{HupLie13}
L.~Hupp and F.~Liers.
\newblock A polyhedral study of the hamiltonian p-median problem.
\newblock \emph{Electronic Notes in Discrete Mathematics}, 41:\penalty0 213 --
  220, 2013.

\bibitem[Maas()]{GITHUB}
Michel Wan~Der Maas.
\newblock Implementation of the approximation algorithm.
\newblock URL
  \url{https://github.com/michelwandermaas/p\_hamiltonian\_heuristic}.

\bibitem[Marzouk et~al.(2016)Marzouk, Moreno-Centeno, and \"Uster]{MarCenUst16}
A.~M. Marzouk, E.~Moreno-Centeno, and H.~\"Uster.
\newblock A branch-and-price algorithm for solving the hamiltonian p-median
  problem.
\newblock \emph{Informs Journal on Computing}, 28\penalty0 (4), 2016.

\bibitem[Vazirani(2001)]{Vaz01}
Vijay~V. Vazirani.
\newblock \emph{Approximation Algorithms}.
\newblock Springer-Verlag, Berlin, Heidelberg, 2001.

\bibitem[Zohrehbandian(2007)]{Zohrehbandian07}
M.~Zohrehbandian.
\newblock A new formulation of the {Hamiltonian} $p$-median problem.
\newblock \emph{Applied Mathematical Sciences}, 1\penalty0 (8):\penalty0
  355--361, 2007.

\end{thebibliography}
\end{document}